\tikzstyle{decision} = [diamond, draw, fill=blue!20, 
\tikzstyle{block} = [rectangle, draw, fill=blue!20,  text centered, rounded corners, minimum height=4em]
\tikzstyle{line} = [draw, -latex']
\tikzstyle{cloud} = [draw, ellipse,fill=red!20, node distance=6.6cm,
\tikzstyle{algorithm} = [rectangle, draw, fill=green!20,  text centered, rounded corners, minimum height=4em, minimum width =6em]
\tikzstyle{initialization} = [rectangle, draw,   text centered, minimum height=4em, minimum width =6em]
\newtheorem{theorem}{Theorem}
\newtheorem*{theorem*}{Theorem}
\newtheorem{proposition}{Proposition}
\newtheorem{lemma}{Lemma}
\theoremstyle{definition}
\newtheorem{definition}{Definition}
\theoremstyle{remark}
\newtheorem*{remark}{Remark}
\newcommand{\M}{{\mathcal M}}
\newif\ifcomment
\renewcommand{\epsilon}{\varepsilon}
\tikzstyle{block}=[draw, rectangle, minimum height=1cm, text width=1.5cm, text centered, draw=darkgray, font=\small]
\tikzstyle{block_medium}=[draw, rectangle, minimum height=1.5cm, text width=2cm, text centered, draw=darkgray, font=\small]
\tikzstyle{block_large}=[draw, rectangle, minimum height=2cm, text width=2cm, text centered, draw=darkgray, font=\small]
\tikzstyle{line} = [draw, -latex]
\newcommand{\defeq}{\vcentcolon=}
\newcommand{\eqdef}{=\vcentcolon}
\newcommand{\np}[1]{\ifcomment \textcolor{blue}{Nikita: #1} \fi}
\newcommand{\gm}[1]{\ifcomment \textcolor{red}{Georg: #1} \fi}
\def\qed{\hfill $\blacksquare$}
\pgfplotsset{compat=1.14}
\begin{document}
	\title{Coding with Noiseless Feedback over the Z-channel
	}
\author{%
    \IEEEauthorblockN{Christian Deppe, Vladimir Lebedev, Georg Maringer, and Nikita~Polyanskii}\\
	\thanks{
      The results of this paper  have partially been presented at the  26th International Computing and Combinatorics Conference (COCOON) 2020~\cite{deppe2020coding}.

     Christian Deppe was supported by the Bundesministerium f\"ur Bildung und Forschung (BMBF) through Grant 16KIS1005. Vladimir Lebedev's work was supported by the Russian Foundation for Basic Research (RFBR) under Grant No. 19-01-00364 and by RFBR and JSPS under Grant No. 20-51-50007. Georg Maringer's work was supported by the German Research Foundation (Deutsche Forschungsgemeinschaft, DFG) under Grant No.~WA3907/4-1. N. Polyanskii's work was supported by the German Research Foundation (Deutsche Forschungsgemeinschaft, DFG) under Grant No. WA3907/1-1.

      C.~Deppe and G.~Maringer are with the Institute for Communications Engineering, Technical University of Munich, Germany.  N.~Polyanskii is with the Institute for Communications Engineering, Technical University of Munich, Germany, and the  Center for Computational and Data-Intensive Science and Engineering, Skolkovo Institute of Science and Technology, Russia. V.~Lebedev is with the  Institute for Information Transmission Problems, Russian Academy of Sciences, Russia.

Emails: christian.deppe@tum.de, lebedev37@mail.ru, georg.maringer@tum.de, nikita.polyansky@gmail.com
    }}

  \IEEEoverridecommandlockouts
  \maketitle	

\begin{abstract}
In this paper, we consider encoding strategies for the Z-channel with noiseless feedback. We analyze the combinatorial setting where the maximum number of errors inflicted by an adversary is proportional to the number of transmissions, which goes to infinity. 
Without feedback, it is known that the rate of optimal asymmetric-error-correcting codes for 
the error fraction $\tau\ge 1/4$ vanishes as the blocklength grows. 
In this paper, we give an efficient feedback encoding scheme with $n$ transmissions that achieves a positive rate for any fraction of errors $\tau<1$ and  $n\to\infty$. Additionally, we state an upper bound on the rate of asymptotically long feedback asymmetric error-correcting codes. 
\end{abstract}

\section{Introduction}
In optical communications and other digital transmission systems the ratio between the error probabilities of type $1\to 0$ and $0 \to 1$ can be large~\cite{blaum1994codes}. 
Practically, one can assume that only one type of error can occur. These channels are called asymmetric.
  In this paper, we discuss the problem of finding coding strategies for the Z-channel with feedback. The Z-channel 
is of asymmetric nature because it permits an error $1\to 0$, whereas it prohibits an error $0\to 1$ (see Figure~\ref{fig:binary_Z_channel}). Transmission is referred to as being error-free if the output symbol matches the input symbol of the respective symbol transmission.

The Z-channel error model in this paper is purely combinatorial and can be seen as an extension of the model for the zero-error capacity which was introduced in \cite{shannon1956zero}. In this setting, we limit the fraction of erroneous symbols by $\tau = t/n$, where $n$ denotes the blocklength and $t$ is the maximum number of errors within a block. This is in contrast to the probabilistic setting, in which the error probability of the channel is fixed. 
Feedback codes achieving the Z-channel capacity are considered in \cite{tallini2008feedback}.
The figure of merit examined in this work is the maximum asymptotic rate, written as $R(\tau)$ and also called capacity error function \cite{ahlswede2006non}, which we define to be the maximum rate at which information can be communicated over a channel error-free as the blocklength $n$ goes to infinity in the aforementioned combinatorial setting.

The problem of finding encoding strategies for the Z-channel using noiseless feedback is equivalent to a variation of Ulam's game, the half-lie game.
The first appearance of the half-lie game occurs in \cite{rivest1980coping}. In this game for two players one player, referred to as Paul, tries to find an element $x \in \mathcal{M}$ by asking $n$ yes-no questions which are of the form: Is $x\in A$ for some $A \subseteq \mathcal{M}$? The other player, the responder Carole, knows $x$ and is allowed to lie at most $t$ times if the correct answer to the question is yes. In comparison to the original Ulam game \cite{ulam1991adventures}, Carole is not allowed to lie if the correct answer is no. Before Ulam proposed the game it was already described by Berlekamp
\cite{berlekamp1968block} and by Renyi \cite{renyi61}. For a survey
of results see \cite{Cicalese13}.
It is known that for fixed $t$, the cardinality of the maximal set $\mathcal{M}$ is asymptotically  $2^{n+t}t! n^{-t}$
for Paul to win the half-lie game. First this was shown for $t=1$ in~\cite{cicalese2000optimal} and later generalized in~\cite{dumitriu2004halfliar,spencer2003halflie} for arbitrary $t$.
Due to the equivalence of the half-lie game and the coding problem with feedback for the Z-channel, the coding problem has been solved for an arbitrary but fixed number of errors for the asymptotic case when $n$ goes to infinity.


Finally, let us refer to the most relevant papers to our work. By random arguments, it was proved in~\cite{d1975upper} that $R(\tau)>0$ for $\tau<1/2$. In \cite{deppe2020bounds} a feedback strategy based on the rubber method \cite{ahlswede2006non} was introduced to find an encoding strategy achieving a positive asymptotic rate for any $\tau<1/2$. The corresponding lower bound on $R(\tau)$  is plotted in green  on Figure~\ref{fig:asymptotic rate}.
\subsection{Our contribution}
In this paper, we develop new efficient encoding algorithms for the Z-channel with feedback. In particular, we provide a family of error-correcting codes with the asymptotic rate $(1+\tau) - (1+\tau)\log(1+\tau) + \tau \log \tau$, which is positive for any $\tau<1$ and improves the result from~\cite{deppe2020bounds} in all but countable number of points. The corresponding lower bound on $R(\tau)$ is shown in blue in Figure~\ref{fig:asymptotic rate}. Additionally, we prove an upper bound on $R(\tau)$, which is depicted as the dashed line.
\subsection{Outline}
The remainder of the paper is organized as follows. In Section~\ref{ss::formal definitions}, we formally define the problem of coding with feedback over the Z-channel and introduce some auxiliary terminology. In
Section~\ref{ss::optimal rate}, we provide our encoding algorithm achieving a positive asymptotic rate for any fraction of errors $\tau<1$, which gives rise to our main result, Theorem~\ref{th::optimal rate}. An upper bound on the asymptotic rate is proposed in Section~\ref{ss::upper bounds}. Finally, Section~\ref{ss::conclusion} concludes the paper. 
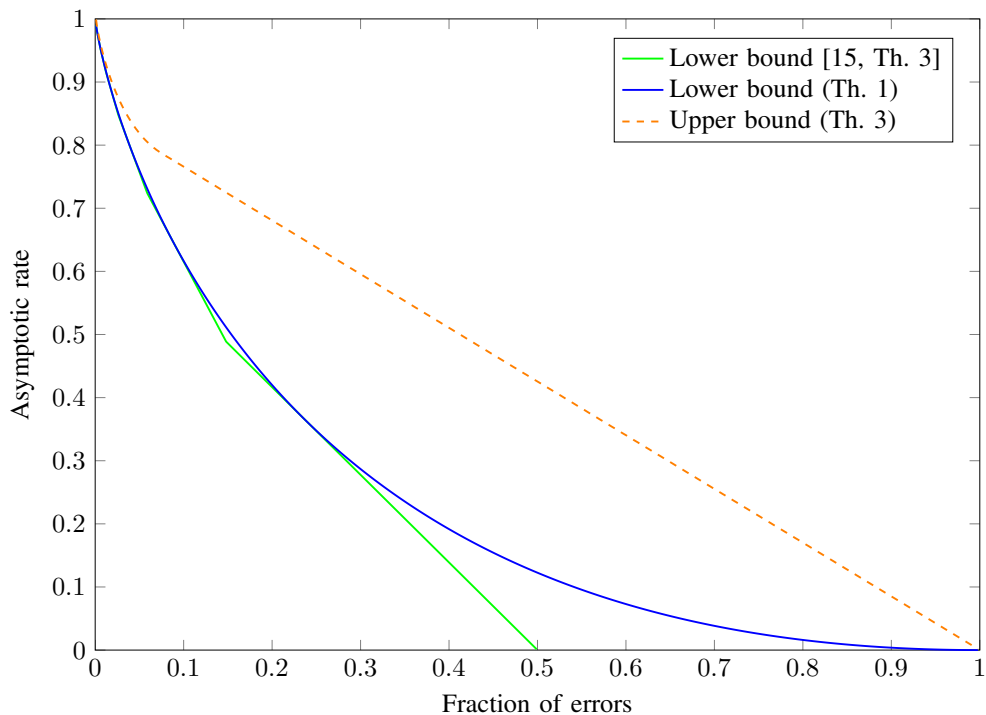
\begin{figure}
\centering
\begin{tikzpicture}[thick,scale=0.9]
\pgfplotsset{compat = 1.3}
\begin{axis}[
	legend style={nodes={scale=1, transform shape}},
	legend cell align={left},
	width = 0.8\columnwidth,
	height = 0.6\columnwidth,
	xlabel = {Fraction of errors},
	xlabel style = {nodes={scale=1, transform shape}},
	ylabel = {Asymptotic rate},
	ylabel style={nodes={scale=1, transform shape}},
	xmin = 0,
	xmax = 1,
	ymin = 0,
	ymax = 1,
	legend pos = north east]

\addplot[color=green, mark=none,thick] table {rub.txt};
\addlegendentry{Lower bound~\cite[Th.~3]{deppe2020bounds}}

\addplot[color=blue, mark=none,thick] table {low.txt};
\addlegendentry{Lower bound (Th.~\ref{th::optimal rate})}


\addplot[color=orange,dashed, mark=none,thick] table {up.txt};
\addlegendentry{Upper bound (Th.~\ref{th::non-trivial upper bound})}

\end{axis}
\end{tikzpicture}
\caption{Maximum asymptotic rate of error-correcting codes for the Z-channel with noiseless feedback.}
  \label{fig:asymptotic rate}
\end{figure}

\section{Preliminaries}\label{ss::formal definitions}
A transmission scheme with feedback enables the sender to choose his encoding strategy in a way that makes use of the knowledge about previously received symbols at the decoder. This is shown in Figure~\ref{fig:channel_feedback}.
Let $\mathcal{M}$ denote the set of messages. The sender chooses one of them, say $m$, which he wants to send to the receiver.
An encoding algorithm for a feedback channel of blocklength $n$ is composed of a set of functions
$$
c_i : \mathcal{M} \times \{0,1\}^{i-1} \rightarrow \{0,1\}, \quad i \in \{1,\dots,n\}.
$$
The encoding algorithm is then constructed as
\begin{equation}\label{def::encoding algorithm}
    c(m,y^{n-1}) = ((c_1(m), c_2(m,y_1),\dots,c_n(m,y^{n-1})),
\end{equation}
where $y^k\defeq(y_1,\dots,y_k)$ with $y_i$ being the $i^{th}$ received symbol. Moreover, the set of possible values for the received symbol $y_i$ conditioned on $c_i$ is defined by the channel, in our case the Z-channel depicted in Figure~\ref{fig:binary_Z_channel}.
Suppose that at most $t$ errors occur within a block of length $n$. For $m\in\mathcal{M}$, we define the set of output sequences for an encoding strategy by
\begin{equation*}
    \mathcal{Y}_t^n(m) \defeq \left\{y^n\in\{0,1\}^n : \,y_i\le c_i(m,y^{i-1}),\ d_H(y^n, c(m,y^{n-1}))\leq t \right\} \enspace,
\end{equation*}
where $d_H(\cdot,\cdot)$ denotes the Hamming metric. Additionaly, we denote the Hamming weight of a sequence $x$ by $w_H(x)$.

\begin{figure}[t]
  \centering
  \begin{minipage}[t]{.3\textwidth}
  \centering
	\begin{tikzpicture}
	\node (A) at (0,0) {0};
	\node (C) at (3,0) {0};
	\node (B) at (0,-3) {1};
	\node (D) at (3,-3) {1};
	\path[line] (A) -- (C);
	\path[line] (B) -- (D);
	\path[line] (B) -- (C);
	\end{tikzpicture}
	\captionof{figure}{Z-channel}
	\label{fig:binary_Z_channel}
\end{minipage}%
\begin{minipage}[t]{.7\textwidth}
  \centering
  \begin{tikzpicture}
    \node[block_large] (c) at (0,0) {Channel};
    \node[block] (d) at (3.5,0) {Decoder};
    \node[block] (e1) at (-3,0) {Encoder};
    \node (b) at (-5,0) {};
    \node (u1) at (5.5,0) {};
    \path[line] (b) -- node[near start, above] {$m$} (e1);
    \path[line] (e1.east) -- node[above] {$c_i$} (c.west);
    \path[line] (c) -- node[near start, above] {$y_i$} (d);
    \path[line] (2,0) -- (2,2) -- (-3.5,2) -- ([xshift=-0.5cm]e1.north);
    \path[draw, dashed, rounded corners] (-4.25,2.5) -- (-4.25,-1.5) -- (4.75,-1.5) -- (4.75,2.5) -- cycle;
    \path[line] (d.east) -- node[near end, above] {$\hat{m}$} (u1);
    \node[draw, circle, minimum size=1mm, inner sep=0pt, outer sep=0pt, fill=darkgray] at (2,0) {};
\end{tikzpicture}
  \captionof{figure}{Channel with feedback}
  \label{fig:channel_feedback}
\end{minipage}
\end{figure}

\begin{definition}
    An encoding strategy~\eqref{def::encoding algorithm} is called \textit{successful} if $\mathcal{Y}_t^n(m_1) \cap \mathcal{Y}_t^n(m_2) = \emptyset$ for all $m_1,m_2\in\mathcal{M}$ with $m_1\neq m_2$.
\end{definition}
\begin{definition}
	Let $M(n,t)$ denote the maximum number of messages in $\mathcal{M}$ for which there exists a successful encoding strategy.  A successful encoding strategy for $M(n,t)$ messages is said to be \textit{optimal}.
\end{definition}
Using this terminology, let us recall the well-known result on $M(n,t)$ for a fixed integer $t$.
\begin{lemma}[Follows from~\cite{spencer2003halflie}]
     Given a fixed integer $t\ge 1$, the maximum number of messages is asymptotically
     $$
     M(n,t)= \frac{2^{n+t}t!}{n^t}(1+o_t(1)) \quad \text{as }n\to\infty.
     $$
\end{lemma}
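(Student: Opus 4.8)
The plan is to reduce the quantity $M(n,t)$ to the combinatorial half-lie game and then import the asymptotic count already established for that game in~\cite{spencer2003halflie}. Concretely, I would argue that a successful encoding strategy with message set $\mathcal{M}$ is, after translation, the same object as a winning strategy for Paul searching $\mathcal{M}$ with $n$ questions against a responder who may lie at most $t$ times on true-``yes'' answers; the maximal size of such a search space is exactly the claimed expression, so the lemma follows by transport of the result across the equivalence.

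First I would set up the dictionary between the two problems. Given an encoding strategy $c=(c_1,\dots,c_n)$ with feedback, at step $i$ and after a history of received symbols $y^{i-1}$ I associate Paul's question $A_i(y^{i-1}) \defeq \left\{m\in\mathcal{M} : c_i(m,y^{i-1})=1\right\}$, i.e.\ ``is the secret message in $A_i$?''. Carole's answer is the received symbol $y_i$, with $y_i=1$ read as ``yes'' and $y_i=0$ as ``no''. The Z-channel constraint $y_i\le c_i(m,y^{i-1})$ appearing in the definition of $\mathcal{Y}_t^n(m)$ says precisely that when the true answer is ``no'' ($c_i=0$) the received symbol must be $0$, so Carole cannot lie, whereas when the true answer is ``yes'' ($c_i=1$) the symbol may be flipped to $0$, so Carole may lie; and $d_H(y^n,c(m,y^{n-1}))\le t$ bounds the number of such flips by $t$. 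Thus $\mathcal{Y}_t^n(m)$ is exactly the set of answer sequences Carole can legally produce while holding the secret $m$ and using at most $t$ lies, and the adaptive structure matches on both sides because $c_i$ and $A_i$ are both allowed to depend on $y^{i-1}$.

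Next I would translate the success condition. The sets $\mathcal{Y}_t^n(m)$ are pairwise disjoint if and only if every legal answer sequence is compatible with a unique message, which is exactly the statement that Paul, observing the responses, can always identify the secret; conversely any winning strategy for Paul yields, by the same dictionary, an encoding strategy with disjoint output sets. Hence $M(n,t)$ equals the largest cardinality of a search space on which Paul has a winning strategy in the half-lie game with $n$ questions and $t$ lies. Finally, invoking the known solution of that game~\cite{spencer2003halflie} (and~\cite{cicalese2000optimal} for $t=1$) gives that this largest cardinality is $\frac{2^{n+t}t!}{n^t}(1+o_t(1))$, which is the claim.

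The only genuine work lies in the first two steps—verifying the bijection between successful feedback strategies and Paul's strategies, and checking that disjointness of the $\mathcal{Y}_t^n(m)$ is equivalent to unique identifiability; the asymptotic itself is imported verbatim, so I expect the main obstacle to be making the ``only if'' direction of the identifiability equivalence fully rigorous, i.e.\ confirming that disjoint output sets really do let the decoder reconstruct $m$ from every $y^n\in\mathcal{Y}_t^n(m)$ and that the dictionary captures all legal answer sequences of the game and no others.
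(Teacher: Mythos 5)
Your proposal is correct and is exactly the argument the paper intends: the lemma is stated as ``Follows from~\cite{spencer2003halflie}'' with no written proof, relying on the equivalence between feedback coding over the Z-channel and the half-lie game that the paper describes in its introduction, and your dictionary (questions $A_i(y^{i-1})=\{m: c_i(m,y^{i-1})=1\}$, lies only on true-``yes'' answers, disjointness of the $\mathcal{Y}_t^n(m)$ as unique identifiability) is the standard and correct way to make that equivalence precise before importing the asymptotic count.
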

Unlike studies in~\cite{spencer2003halflie,dumitriu2004halfliar,cicalese2000optimal}, we discuss the case when $t$ is linear with $n$. In this setting, it is natural to investigate the exponential growth of the quantity $M(n,t)$. Hereafter, we write $\log $ to denote the logarithm in base two.
\begin{definition}\label{def::asymptotic rate}
	For any $\tau$ with $0\le \tau \le 1$, we define the \textit{maximum asymptotic rate} of optimal feedback encoding strategies capable of correcting a fraction $\tau$ of asymmetric errors to be
	$$
	R(\tau) \defeq \limsup_{n\to\infty} \frac{\log(M(n,\lceil \tau n\rceil))}{n}.
	$$
\end{definition}
\section{Lower Bound on $R(\tau)$}\label{ss::optimal rate}
In this section we give a successful encoding strategy for the Z-channel. This gives a lower bound  on the maximum asymptotic rate $R(\tau)$ of optimal feedback encoding strategies capable of correcting a fraction $\tau$ of asymmetric errors.

\begin{theorem}\label{th::optimal rate}
	For any $\tau$, $0\le\tau\le 1$, we have 
	$$
	R(\tau) \ge \underline{R}(\tau)\defeq (1+\tau)-(1+\tau)\log(1+\tau) + \tau \log\tau.
	$$
\end{theorem}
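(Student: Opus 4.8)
The plan is to recast the feedback coding problem as a sequential two-outcome game and to build the encoder around a potential function equal to the relevant sphere-packing volume, which I aim to show is (asymptotically) conserved under an optimal balanced split, so that the achievable number of messages matches a clean binomial expression.

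First I would reformulate the problem as a decision process. Because the channel is noiseless-feedback, after round $i$ the sender knows $y^i$ and hence knows exactly which of its transmitted $1$'s were flipped. An encoding strategy is therefore a binary decision tree: at each round the current set $\mathcal{S}$ of candidate messages is split into $\mathcal{S}_0$ (messages encoded by $0$) and $\mathcal{S}_1$ (messages encoded by $1$), and the output selects the successor state. Since the Z-channel only realizes $1\to 0$, the two outcomes are asymmetric: on output $1$ the candidate set becomes $\mathcal{S}_1$ with the error budget unchanged, whereas on output $0$ it becomes $\mathcal{S}_0$ together with $\mathcal{S}_1$ but with the budget of the latter decreased by one (those messages in $\mathcal{S}_1$ whose budget has reached $0$ are eliminated, as they could not have produced a $0$). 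The state is thus described by the number $N$ of remaining transmissions and, for each surviving message, its remaining budget $e$, and the strategy is successful precisely when every message reaches a distinct leaf. The difficulty already visible here is that an output $0$ is \emph{ambiguous} — consistent both with a genuine $0$ and with a flipped $1$ — so the branches cannot be symmetrized and no naive ``retransmit on error'' scheme parses correctly.

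Next I would introduce the potential $A(N,e)\defeq\binom{N+e}{e}2^{-e}$ and prove the invariant that a state is successfully solvable whenever $\sum_{m} A(N,e_m)\le 2^N$, together with the boundary cases $A(N,0)=1$ and small $N$. The encoder is then greedy: at every round choose a split $\mathcal{S}=\mathcal{S}_0\cup\mathcal{S}_1$ for which both successor states again satisfy the invariant. Establishing the existence of such a split is the crux of the argument. Because the transition is asymmetric, $A$ does not obey a clean two-term conservation law (one checks directly that $A(N,e)=A(N-1,e)+\tfrac12 A(N,e-1)$, not the binomial-type identity one would want), so I would instead prove a balanced-splitting lemma: given weights summing to at most $2^N$, one can $2$-color the messages so that the $\mathcal{S}_1$-branch weight $\sum_{m\in\mathcal{S}_1}A(N-1,e_m)$ and the $\mathcal{S}_0$-branch weight $\sum_{m\in\mathcal{S}_0}A(N-1,e_m)+\sum_{m\in\mathcal{S}_1}A(N-1,e_m-1)$ are each at most $2^{N-1}$, discharging integrality by a rounding argument that exploits the fact that each individual weight $A(N,e)$ is small relative to $2^N$. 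This balanced-splitting step is where I expect essentially all of the work to lie.

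Finally, instantiating the invariant at $N=n$ with every message carrying the full budget $e=t=\lceil\tau n\rceil$ yields $M(n,t)\ge 2^n/A(n,t)=2^{n+t}/\binom{n+t}{t}$. Taking logarithms, dividing by $n$, and applying Stirling's approximation in the form $\tfrac1n\log\binom{n+t}{t}\to(1+\tau)H\!\left(\tfrac{\tau}{1+\tau}\right)$, where $H$ denotes the binary entropy function, gives $\tfrac1n\log M(n,t)\to(1+\tau)-(1+\tau)H\!\left(\tfrac{\tau}{1+\tau}\right)$, which simplifies to $(1+\tau)-(1+\tau)\log(1+\tau)+\tau\log\tau=\underline{R}(\tau)$, as claimed. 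The only nonroutine calculation here is the simplification of the entropy term, while the genuine obstacle remains the balanced-splitting lemma underlying the construction.
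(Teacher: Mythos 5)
Your reduction to a decision tree, the target bound $M(n,t)\ge 2^{n+t}/\binom{n+t}{t}$, and the closing entropy computation are all correct, and the potential $A(N,e)\defeq\binom{N+e}{e}2^{-e}$ is indeed the right volume (the paper's Lemma~\ref{lem::strong statement} tracks essentially the same quantity, block by block). The gap is that the balanced-splitting lemma, which you rightly identify as carrying all of the work, is not merely unproven but false in the exact form you state it. Write $\alpha_e\defeq A(N-1,e)/A(N,e)=N/(N+e)$ and $\beta_e\defeq A(N-1,e-1)/A(N,e)=2eN/\bigl((N+e)(N+e-1)\bigr)$, and consider a homogeneous state in which all surviving messages have the same budget $e$ with $1\le e<N-1$ and the invariant is tight, $M\,A(N,e)=2^N$. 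A split placing a weight fraction $x$ into $\mathcal{S}_1$ must satisfy $x\alpha_e\le 1/2$ (branch $1$) and $(1-x)\alpha_e+x\beta_e\le 1/2$ (branch $0$), which force $x\le (N+e)/(2N)$ and $x\ge (N-e)(N+e-1)/\bigl(2N(N-e-1)\bigr)$; the required lower bound exceeds the upper bound by $e/\bigl(N(N-e-1)\bigr)>0$. Hence no split --- even a fractional one, before any rounding --- keeps both successors inside the invariant: the two-branch total weight necessarily overshoots $2^N$. The overshoot is only $O(1/N)$ relatively per step, so the accumulated loss over $n$ rounds is at most polynomial in $n$ and the rate survives; but to make the induction close you must build this slack into the invariant (e.g.\ prove solvability under $\sum_m A(N,e_m)\le c_N 2^N$ for an explicit correction sequence $c_N$), supply the integrality argument, and add terminal routines for the base cases (small message set, or exhausted error budget), none of which is in your sketch.

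This is exactly the difficulty the paper engineers around. Rather than splitting one bit at a time, its Partitioning Algorithm splits $\binom{\delta}{p}$-ways using a constant-weight block of length $\delta$ and weight $p\approx\delta(1+\tau)/2$ --- note that your critical split fraction $(N+e)/(2N)\to(1+\tau)/2$ is precisely this $p/\delta$ --- and it inserts an explicit multiplicative slack into the potential via $\gamma_e=(1-\epsilon)\binom{\delta}{p}/\binom{\delta-p+e}{e}$, so that the per-block rounding and imbalance losses are absorbed by the $\epsilon$; the $(1-\epsilon)^k$ penalty is then made negligible by taking $\delta\to\infty$ after $n\to\infty$. It also closes the base cases with the Weight and Uncoded Algorithms. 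Your plan can very likely be repaired along these lines, but as written the key step fails.
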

\subsection{Encoding strategy}\label{sec::encoding startegy}
At the start of the message transmission the receiver only knows the set of messages $\mathcal{M}$. The sender chooses a message $m\in \mathcal{M}$. The goal of an encoding strategy is to reduce the number of \textit{eligible} messages from the receiver's viewpoint until only one message $m$ is left. The encoding algorithm we provide divides the number of channel uses $n$ into subblocks. Therefore, the encoding procedure is potentially divided into several steps. We denote the set of eligible messages from the receiver's perspective after the $i^{th}$ step as $\mathcal{M}_{i+1}$ with $M_{i+1}=|\mathcal{M}_{i+1}|$,  the number of remaining channel uses as $n_{i+1}$ and the maximum number of possible errors  as $t_{i+1}$. 


In the following the algorithm depicted in Figure~\ref{fig::algorithm} is described. Before every step, the sender (as well as the receiver) checks the following two properties
\begin{equation}\label{eq::prop1}
t_i = 0
\end{equation}
and
\begin{equation}\label{eq::prop2}
\lvert\mathcal{M}_i\rvert \le n_i-t_i+1.
\end{equation}
    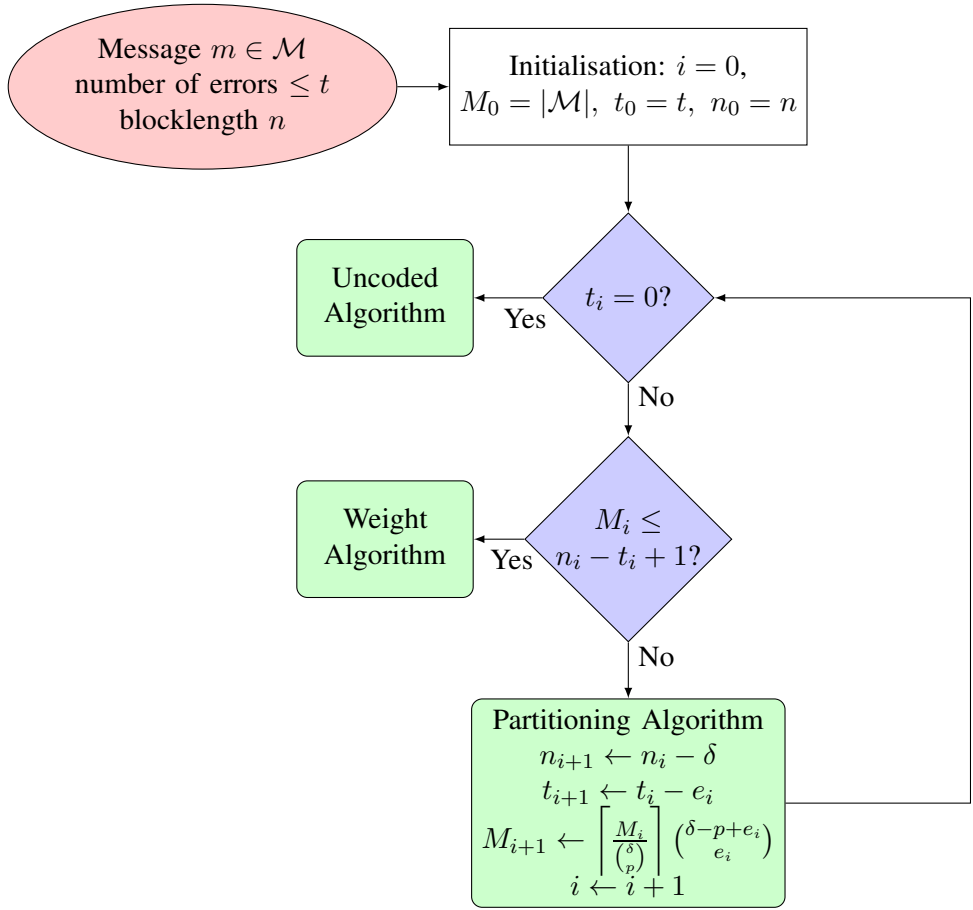
\begin{figure}
    \centering
\begin{tikzpicture}[node distance = 3cm, auto]
    \node [initialization,align=center] (init) {Initialisation: $i=0$, \\ $M_0=|\mathcal{M}|,\ t_0=t, \ n_0 = n$};
    \node [cloud, left of=init, align=center, node distance=5.6cm] (start) {Message $m\in\mathcal{M}$  \\ number of errors $\le t$\\ blocklength $n$};
    \node [decision, below of=init, node distance=2.8cm] (decide1) {$t_i=0$?};
    \node [decision, below of=decide1, node distance=3.2cm] (decide2) {$M_i\le n_i- t_i + 1$?};
    \node [algorithm, left of=decide1, node distance=3.2cm, align=center] (strategy3) {Uncoded \\ Algorithm};
    \node [algorithm, left of=decide2, node distance=3.2cm, align=center] (strategy2) {Weight \\ Algorithm};
    \node [algorithm, below of =decide2, node distance=3.5cm, align=center] (strategy1) {Partitioning Algorithm \\ $n_{i+1}\gets n_i-\delta$ \\ $t_{i+1}\gets t_{i} - e_i$ \\ $M_{i+1}\gets \left\lceil\frac{M_i}{\binom{\delta}{p}}\right\rceil\binom{\delta - p + e_i}{e_i}$ \\
    $i\gets i+1$};
    \path [line] (start) -- (init);
    \path [line] (init) -- (decide1);
    \path [line] (decide1) -- node [near start] {No} (decide2);
    \path [line] (decide1) -- node [near start] {Yes} (strategy3);
    \path [line] (decide2) -- node [near start] {No} (strategy1);
    \path [line] (decide2) -- node [near start] {Yes} (strategy2);
    \path[line] (strategy1.east) -- (4.5,-9.5) -- (4.5,-2.8) -- (decide1.east);
\end{tikzpicture}\caption{Encoding algorithm for transmission over the Z-channel}\label{fig::algorithm}
\end{figure}
Depending on which of them hold the sender chooses one out of three algorithms for encoding.
If both conditions~\eqref{eq::prop1}-\eqref{eq::prop2} do not hold, then the sender makes use of Partitioning Algorithm. This strategy tries to limit the set of eligible messages by dividing the message space into subsets and sending the index of the subset containing the message. After this subblock transmission the sender and the receiver examine the conditions~\eqref{eq::prop1}-\eqref{eq::prop2} and check whether the encoding and decoding strategies have to be adjusted for the remainder of the block.

If property~\eqref{eq::prop1} is violated and property~\eqref{eq::prop2} holds, then the sender uses the Weight Algorithm for information transmission in the remaining channel uses.
 
If the condition~\eqref{eq::prop1} is true, then the sender applies the Uncoded Algorithm for information transmission in the remaining channel uses.
Below we describe the three algorithms required for our encoding strategy. \medskip

\textbf{Partitioning Algorithm:}
    This algorithm relies on the specific choice of positive fixed integers $\delta$ and $p$ with $\delta>p$. We partition the message space before the $i^{th}$ step $\mathcal{M}_i$ into $\binom{\delta}{p}$ subsets $\mathcal{M}_{i,j}$ of almost equal sizes
    $$
        \mathcal{M}_i = \bigcup_{j=1}^{\binom{\delta}{p}} \mathcal{M}_{i,j} \enspace .
    $$
    The size of each subset is either $\left\lceil M_i / \binom{\delta}{p}\right\rceil$, or $\left\lfloor M_i / \binom{\delta}{p}\right\rfloor$.
    The exact way in which the message space is to be partitioned is to be agreed between the sender and the receiver before the data transmission. Then the sender finds the index of the group containing the message and transmit this index using a subblock of length $\delta$ containing $p$ ones. In this way the receiver can determine the number of errors inflicted by the channel within this subblock by counting the number of ones. There are $p+1$ possible cases depending on the number of errors $e_i$ within the respective subblock of the $i^{th}$ step.
    
    If $e_i=p$ errors occur, the message space consistent with the outcome of the channel is not changed and the receiver obtains the information that $\mathcal{M}_{i+1}= \mathcal{M}_i$, $n_{i+1} = n_i-\delta$ and $t_{i+1} = t_i - p$.
    
	When $e_i<p$ errors occur, there are ${\delta -p + e_i \choose e_i}$ subsets of messages $\mathcal{M}_{i,k}$ that are consistent with the outcome of the Z-channel. $\M_{i+1}$ is then equal to the union of these subsets. Therefore, the set of eligible messages in accordance with the received $\delta$ symbols is reduced and we have $M_{i+1}\le \left\lceil M_i / \binom{\delta}{p}\right\rceil \binom{\delta -p + e_i}{e_i}$. Moreover, the receiver and the sender obtain $n_{i+1} = n_i - \delta$ and $t_{i+1} = t_i - e_i$.
	
    \medskip
    
    \textbf{Weight Algorithm:} We order the messages within the set of eligible messages, denoted by $\mathcal{M}'$, by enumerating them: $\mathcal{M'} = \{m_0, m_1, \dots, m_{\lvert \mathcal{M'} \rvert-1}\}$. The sender would like to transmit one of the messages of $\mathcal{M}'$, say $m_k$, to the receiver by using the channel $n'$ times. To this end, the sender transmits a $1$ over the channel until a $1$ is received exactly $k$ times. This happens at some point if a sufficient amount of channel uses is considered because the number of errors is limited. We denote this limit as $t'$. After that, the sender transmits $0$-symbols which cannot be disturbed by the Z-channel. 
    The receiver finds the Hamming weight $w$ of the received sequence of length $n'$ and outputs the message $m_w$. This strategy is successful, i.e., $m_k=m_w$, if $|\mathcal{M}'|\leq n'-t'+1$.  \medskip
    
    \textbf{Uncoded Algorithm:} We denote the ordered set of eligible messages as $\mathcal{M}' = \{m_0, m_1, \dots, m_{\lvert \mathcal{M'} \rvert-1}\}$. The senders task is to send one of the messages, say $m_k$ to the receiver by using the channel $n'$ times.  In order to do so, it sends the (standard) binary representation of the index $k$ over the channel. This strategy is successful if the sender is allowed to use the channel at least $\lceil\log \mathcal{M}'\rceil$ times and no errors occur.  \medskip
 
We shall prove that for any proper choice of integers $\delta$, $p$, $k$ and $t$ and real $\epsilon>0$, the sender can have the message set $\mathcal{M}$ of size at least
$$
\left\lceil \frac{{\delta \choose p}}{\epsilon} \right\rceil  \frac{(1-\epsilon)^k{\delta \choose p}^k}{{\delta k -pk + t + p \choose t+p}} - 1
$$
for $n=\left\lceil{\delta \choose p}/\epsilon \right\rceil + \delta k$ channel uses, having at most $t$ errors. More formally, it is shown in Lemma~\ref{lem::strong statement}.
\subsection{Analysis of the proposed algorithm}
\begin{lemma}\label{lem::strong statement}
 	Let $k$, $\delta$, and $p$ be positive integers such that $p < \delta$ and let $t \geq 0$. Let $\epsilon>0$ be a fixed real number such that 
 $$
 	    \gamma_e \defeq (1-\epsilon) \frac{\binom{\delta}{p}}{\binom{\delta-p+e}{e}} > 1, \quad \forall e \in \{0,\ldots,p-1\}.
 $$
 We define $\gamma_p\defeq 1$, $A\defeq \left\lceil \binom{\delta}{p}/\epsilon \right\rceil$
 and the set
 	$$
 	    \mathcal{S}_{(k,t,p)} \defeq \left\{(k_0,k_1,\ldots,k_p) \in \mathbb{N}_0^{p+1}:\quad \sum_{i=0}^p k_i =k,\, \sum_{i=0}^p ik_i \leq t+p \right\}.
 	$$
 	Then for any non-negative integers $t$ and $k$ such that $\delta k\ge t$, we have 
	\begin{equation}\label{eq::finite length}
	M(A + \delta k, t)\ge  \left\lfloor A\min_{\mathcal{S}_{(k,t,p)}} \prod_{e=0}^{p} \gamma_e^{k_e} \right\rfloor.
\end{equation}
In particular, it follows that
\begin{equation}\label{in::construction}
M(A+\delta k,t)\ge  A  \frac{(1-\epsilon)^k{\delta \choose p}^k}{{\delta k -pk + t + p \choose t+p}} - 1.
\end{equation}
\end{lemma}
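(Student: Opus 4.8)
The plan is to prove the finite-length bound \eqref{eq::finite length} by induction on $k$, and then derive the clean closed form \eqref{in::construction} by lower-bounding the minimum over $\mathcal{S}_{(k,t,p)}$ with a single well-chosen summation identity. The inductive structure mirrors the Partitioning Algorithm: each application of that algorithm consumes one subblock of length $\delta$ (reducing $A+\delta k$ to $A+\delta(k-1)$), and splits into $p+1$ branches according to the number of errors $e\in\{0,\dots,p\}$ inflicted in that subblock. The base case $k=0$ should follow from the Uncoded Algorithm (or the Weight Algorithm via property~\eqref{eq::prop2}), since $\min_{\mathcal{S}_{(0,t,p)}}\prod\gamma_e^{k_e}=1$ and we only need $M(A,t)\ge A$, for which $A\ge\binom{\delta}{p}$ noiseless transmissions suffice to distinguish the messages.

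For the inductive step, I would run the Partitioning Algorithm once on the current message set of size $M\defeq\lfloor A\min_{\mathcal{S}_{(k,t,p)}}\prod_e\gamma_e^{k_e}\rfloor$. If $e=p$ errors occur, the message space is unchanged but we have spent $\delta$ channel uses and reduced the error budget to $t-p$; the inductive hypothesis applied to $(k-1,\,t-p)$ must dominate $M$, which works because $\gamma_p=1$ and $\mathcal{S}_{(k-1,t-p,p)}$ embeds into $\mathcal{S}_{(k,t,p)}$ by incrementing the $p$-th coordinate. If $e<p$ errors occur, at most $\binom{\delta-p+e}{e}$ subsets remain consistent, so the residual message count is at most $\lceil M/\binom{\delta}{p}\rceil\binom{\delta-p+e}{e}$, and we invoke the hypothesis for $(k-1,\,t-e)$. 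The key algebraic estimate is that, using $A\ge\binom{\delta}{p}/\epsilon$ and $\gamma_e=(1-\epsilon)\binom{\delta}{p}/\binom{\delta-p+e}{e}$, the ceiling-induced additive slack $\binom{\delta-p+e}{e}$ is absorbed by the factor $\epsilon$ in $A$, exactly as in the (commented-out) computation $\lceil M/\binom{\delta}{p}\rceil\binom{\delta-p+e}{e}\le M\binom{\delta-p+e}{e}/\binom{\delta}{p}+\binom{\delta-p+e}{e}$, so that the resulting bound matches the inductive hypothesis for the index vector with the $e$-th coordinate incremented. Since the adversary chooses $e$, we must take the worst branch, which is precisely why the minimum over $\mathcal{S}_{(k,t,p)}$ appears: each branch contributes the term of the product with one coordinate raised, and the error-budget constraint $\sum ik_i\le t+p$ records the cumulative spending.

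To pass from \eqref{eq::finite length} to \eqref{in::construction}, I would exhibit a single feasible choice of $(k_0,\dots,k_p)$ — or rather lower-bound the full minimum by relaxing it — so that $\min_{\mathcal{S}_{(k,t,p)}}\prod_e\gamma_e^{k_e}\ge (1-\epsilon)^k\binom{\delta}{p}^k/\binom{\delta k-pk+t+p}{t+p}$. The natural route is to write $\prod_e\gamma_e^{k_e}=(1-\epsilon)^{\sum_{e<p}k_e}\prod_e(\binom{\delta}{p}/\binom{\delta-p+e}{e})^{k_e}$, bound $(1-\epsilon)^{\sum_{e<p}k_e}\ge(1-\epsilon)^k$, and then recognize that $\sum_{\mathcal{S}}\prod_e\binom{\delta-p+e}{e}^{k_e}$ — summed over compositions with $\sum k_e=k$ and $\sum ik_e\le t+p$ — is a Vandermonde-type convolution evaluating to $\binom{\delta k-pk+t+p}{t+p}$. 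Bounding the minimum below by (number of terms)$^{-1}$ times, or more directly by the reciprocal of, this sum yields the stated form after the $\lfloor\cdot\rfloor$ is dropped at the cost of the $-1$.

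I expect the main obstacle to be the combinatorial identity in the last step: verifying that the relevant sum of products of binomial coefficients collapses to the single binomial $\binom{\delta k-pk+t+p}{t+p}$. This is a multi-index Vandermonde/Chu convolution, and getting the index bookkeeping right — especially the interaction between the constraint $\sum_{e}ik_e\le t+p$ and the shift by $p$ inside $\gamma_p=1$ — is where the argument is most error-prone. The inductive step itself is conceptually routine once the embedding $\mathcal{S}_{(k-1,t-e,p)}\hookrightarrow\mathcal{S}_{(k,t,p)}$ and the $\epsilon$-absorption of the ceiling are set up carefully; the delicate point there is merely ensuring the $\min$ over the smaller index set, after incrementing the appropriate coordinate, still lower-bounds the $\min$ over $\mathcal{S}_{(k,t,p)}$ in every branch simultaneously.
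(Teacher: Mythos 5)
Your inductive skeleton matches the paper's, but the induction does not close as you have set it up. You induct on $k$ with the single base case $k=0$, yet in the inductive step the branch with $e$ errors appeals to the hypothesis for the pair $(k-1,\,t-e)$, and that pair satisfies the lemma's standing assumption $\delta(k-1)\ge t-e$ for \emph{every} $e\in\{0,\dots,p\}$ (in particular for $e=0$) only when $k>\lceil t/\delta\rceil$. When $k=\lceil t/\delta\rceil$ the branch $e=0$ lands outside the lemma's range, and the claimed bound can genuinely fail there if one tries to keep partitioning: with $A+\delta(k-1)$ channel uses and still $t$ errors the Weight Algorithm only guarantees $A+\delta(k-1)-t+1$ messages, which can be smaller than $A$. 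The paper avoids this by inducting on $k+t$ with the \emph{two} base cases $t=0$ (Uncoded Algorithm) and $k=\lceil t/\delta\rceil$ (Weight Algorithm); in the latter case the tuple $(0,\dots,0,k)$ lies in $\mathcal{S}_{(k,t,p)}$ because $pk\le t+p$, so the minimum in \eqref{eq::finite length} collapses to $A$ and the Weight Algorithm delivers $A+\delta k-t+1\ge A$ messages directly. You need this second base case (it is exactly the role of condition~\eqref{eq::prop2} in the algorithm); without it the recursion has nowhere to stop before the hypothesis becomes inapplicable.

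The passage from \eqref{eq::finite length} to \eqref{in::construction} also does not go through along the route you propose. What is needed is an \emph{upper} bound $\max_{\mathcal{S}_{(k,t,p)}}\prod_e\binom{\delta-p+e}{e}^{k_e}\le\binom{\delta k-pk+t+p}{t+p}$, and bounding this maximum by the full sum over $\mathcal{S}_{(k,t,p)}$ only helps if that sum is itself at most $\binom{\delta k-pk+t+p}{t+p}$ --- which is false in general. Already for $k=1$, $\delta=2$, $p=1$, $t=0$ the sum is $\binom{1}{0}+\binom{2}{1}=3$ while $\binom{\delta k-pk+t+p}{t+p}=\binom{2}{1}=2$; the upper index of $\binom{\delta-p+e}{e}$ depends on $e$, so this is not a Vandermonde convolution and does not collapse to a single binomial. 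The paper instead takes the minimizing tuple $(k_0',\dots,k_p')$ with $t'\defeq\sum_i ik_i'\le t+p$ and applies the superadditivity $\binom{u}{v}\binom{w}{z}\le\binom{u+w}{v+z}$ factor by factor to get $\prod_i\binom{\delta-p+i}{i}^{k_i'}\le\binom{(\delta-p)k+t'}{t'}$, then uses monotonicity of $\binom{u+x}{x}$ in $x$ to replace $t'$ by $t+p$. Replacing your convolution identity by this term-wise estimate (and keeping your correct observations about the $\epsilon$-absorption of the ceiling and the embedding $\mathcal{S}_{(k-1,t-e,p)}\hookrightarrow\mathcal{S}_{(k,t,p)}$) repairs the argument.
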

\begin{remark}
	An integer $k$ in Lemma~\ref{lem::strong statement} corresponds to the maximum number of times when Partitioning Algorithm is performed. We can think about the set $\mathcal{S}_{(k,t,p)}$ as the set of all possible error distributions that may happen when the sender transmits the message. Specifically, a tuple $(k_0,k_1,\ldots,k_p)$ means that during $n$ transmissions there will be $k_i$ blocks of $\delta$ channel uses for which exactly $i$ errors occur. We also note that the set $\mathcal{S}_{(k,t,p)}$ includes the tuple $(k,0,\ldots,0)$ and, thus, the minimization in~\eqref{eq::finite length} is well defined. 
\end{remark}
\begin{proof}

	We shall prove this lemma by applying the encoding algorithm described above and using induction on the sum $k+t$ in the pair $(k,t)$. The base cases when $k=\lceil t/\delta\rceil$ or $t=0$ follow from the description of Weight Algorithm and Uncoded Algorithm. Indeed, for $t=0$, we get the minimum in~\eqref{eq::finite length} is at most $A\gamma_0^k\le A{\delta \choose p}^k\le A 2^{\delta k}$. However, according to the encoding algorithm the sender has to use Uncoded Algorithm and can have the message set of size $2^{A+\delta k}$ which is larger than $A 2^{\delta k}$. When $k=\lceil t/\delta\rceil$, then the minimum in~\eqref{eq::finite length} is equal to $A$ and is attained with $(k_0,k_1,\ldots,k_p)=(0,\ldots, 0,k)$ as 
	$$
	\sum_{i=0}^p k_i = k,\quad \sum_{i=0}^p ik_i = p\lceil t/\delta\rceil \le pt/\delta + p\le t+p.
	$$
	Due to the above algorithm, the sender makes use of Weight Algorithm at this point. According to its description, we can transmit $A + \delta k -t +1$ which is larger than $A$ for $k=\lceil t/\delta\rceil$.
	
	This concludes the base cases of the induction. In the following we show the inductive step.
	
	We prove that the sender can transmit  
	$$
	M \defeq  \left\lfloor A\min_{\mathcal{S}_{(k,t,p)}} \prod_{i=0}^{p} \gamma_i^{k_i} \right\rfloor.
	$$
	messages using $A+k\delta$ channel uses when at most $t$ errors may occur. We note that the set $\mathcal{S}_{(k,t,p)}$ includes $(k,0,\ldots,0)$. Thus, the minimization in~\eqref{eq::finite length} is well defined. According to the above algorithm, the sender check two conditions~\eqref{eq::prop1}-\eqref{eq::prop2}. If~\eqref{eq::prop1} is true, then we are in a base case. If~\eqref{eq::prop2} holds, then the sender makes use Weight Algorithm and successfully transmit the message.	If both conditions are failed, then the sender uses Partitioning Algorithm. It remains to check that for any $e\in\{0,1,\ldots,p\}$ (the number of errors in the block of $\delta$ bits), the sender would be able to transmit the message out of
	$$
	\begin{cases}
	\left\lceil \frac{M}{{\delta \choose p}}\right\rceil {\delta -p + e\choose e},\quad&\text{for }e\in\{0,1\ldots,p-1\}\\
	M,\quad &\text{for }e=p
	\end{cases}
	$$
	ones using remaining $A+\delta (k-1)$ channel uses and having at most $t-e$ errors. By the inductive hypothesis we are able to transmit
$$
  \left\lfloor A\min_{\mathcal{S}_{(k-1,t-e,p)}}  \prod_{i=0}^{p} \gamma_i^{k_i}\right\rfloor	
$$
messages for this set-up. Thus it remains to check the inequality
	\begin{equation}\label{eq::final inequality}
\left\lceil \frac{M}{{\delta \choose p}}\right\rceil {\delta -p + e \choose e} \le \left\lfloor A\min_{\mathcal{S}_{(k-1,t-e,p)}}  \prod_{i=0}^{p} \gamma_i^{k_i}\right\rfloor,\quad \forall e \in \{0,1\ldots,p-1\} \enspace . 
\end{equation}
	Let us elaborate on the left-hand side of the inequality
	\begin{align}
	\left\lceil \frac{M}{{\delta \choose p}}\right\rceil {\delta -p + e \choose e} &\le \frac{M p! (\delta - p+e)!}{\delta! e!} + {\delta -p + e \choose e} \nonumber\\
	&\overset{(a)}{\le} \frac{M p! (\delta - p+e)!}{\delta! e!} + A \epsilon \overset{(b)}{=} \frac{M (1-\epsilon)}{ \gamma_e} + A \epsilon, \label{in::simple bound}
	\end{align}
	where we used the property $A\epsilon = \lceil{\delta \choose p} / \epsilon \rceil \epsilon \ge {\delta - p +e \choose e}$ for any $e\in\{0,1,\ldots, p\}$ in $(a)$ and $\gamma_e = (1-\epsilon)\frac{\delta! e!}{p! (\delta - p+e)!}$ in $(b)$.
	We note that
	\begin{equation}\label{eq::evident bound on M}
M \le \left\lfloor A\gamma_e\min_{\mathcal{S}_{(k-1,t-e,p)}}  \prod_{i=0}^{p} \gamma_i^{k_i}\right\rfloor.
	\end{equation}
	This inequality holds because
	$$
	    \gamma_e \min_{\mathcal{S}_{(k-1,t-e,p)}}  \prod_{i=0}^{p} \gamma_i^{k_i} \geq \min_{\mathcal{S}_{(k,t,p)}}  \prod_{i=0}^{p} \gamma_i^{k_i}
	$$
	as we get the left-hand side of the inequality from the right-hand side by adding the additional constraint  that $k_e \geq 1$ to the minimization.
	Thus, combining the inequalities~\eqref{in::simple bound}-\eqref{eq::evident bound on M}, we get 
		\begin{align*}
	\left\lceil \frac{M}{{\delta \choose p}}\right\rceil {\delta - p+e \choose e} &\le  A(1-\epsilon)\min_{\mathcal{S}_{(k-1,t-e,p)}}  \prod_{i=0}^{p} \gamma_i^{k_i} + A \epsilon\\
			&\overset{(c)}{\le} A\min_{\mathcal{S}_{(k-1,t-e,p)}}  \prod_{i=0}^{p} \gamma_i^{k_i}.
	\end{align*}
	To prove~$(c)$, we observe that $\gamma_i\ge 1$ for all $i$.
	As we compare an integer in the left-hand side with a real number in right-hand side, we can apply the floor operation to the latter. This proves~\eqref{eq::final inequality} and completes the proof of the inductive step.
	
	It remains to show~\eqref{in::construction}. Suppose that the minimum in~\eqref{eq::finite length} is attained with $k_0=k_0',k_1=k_1',\ldots,k_p=k_p'$ such that 
	$$
	\sum_{i=0}^{p} k_i' = k,\quad \sum_{i=0}^{p}i k_i' \eqdef t'\le t+p.
	$$
	Then we derive
\begin{align*}
	M &= \left\lfloor A\prod_{i=0}^{p} \gamma_i^{k_i'} \right\rfloor \ge  A(1-\epsilon)^k\prod_{i=0}^{p} \frac{{\delta \choose p}^{k_i'}}{{\delta - p +i \choose i}^{k_i'}} -1
	\\
	&=  A \frac{(1-\epsilon)^k{\delta \choose p}^{k}}{\prod_{i=0}^{p}{\delta - p+i \choose i}^{k_i'}}  -1
	\\
	&\overset{(d)}{\ge}  A \frac{(1-\epsilon)^k{\delta \choose p}^{k}}{{\delta k - pk +t' \choose t'}}  -1 
	\\
	&\overset{(e)}{\ge}  A \frac{(1-\epsilon)^k{\delta \choose p}^{k}}{{\delta k -pk +t+p \choose t+p}} -1,
\end{align*}
where the property ${u \choose v} {w \choose z} \le {u+w \choose v+z}$ yields~$(d)$ and the monotonicity of the function ${u +x \choose x}$ in $x$ implies~$(e)$.
\end{proof}
Finally we are in a good position to prove Theorem~\ref{th::optimal rate}.
\begin{proof}[Proof of Theorem~\ref{th::optimal rate}] 
Let us fix some positive $\tau<1$.
For any $\epsilon_R>0$ and small enough $\epsilon_\tau > 0$, we shall prove the existence of a code of an arbitrary large blocklength and code rate at least $\underline{R}(\tau)-\epsilon_R$ capable of correcting a fraction $\tau-\epsilon_\tau$ of errors.

In what follows, we vary positive integers $k$ and $\delta$ with $k>\delta$.  Define $t=\lceil \tau k \delta\rceil$, $p=\lfloor \delta(1/2 + \tau /2) \rfloor $, $A=\lceil {\delta \choose p}/\epsilon \rceil$  and $n= A+\delta k$, where the real parameter $\epsilon$ is fixed and satisfies 
\begin{align}
    0<&\,\epsilon< \frac{1-\tau}{2}\le 1-p/\delta, \nonumber
\\    0<&\,\underline{R}(\tau)+\log(1-\epsilon) - 3\epsilon_\tau. \label{eq::epsilon req}
\end{align} 
Let $\delta_0$ be such that for any $\delta\ge\delta_0(\epsilon_\tau, \tau)$ and $k\ge \delta$, we have
\begin{equation} \label{eq:: delta choose p}
{\delta \choose p} \ge 2^{\delta \left(h\left(\frac{1+\tau}{2}\right)-\epsilon_\tau\right)}
\end{equation}
and
\begin{equation} \label{eq:: delta k - pk + t choose t}
{\delta k -pk +t+p \choose t+p} \le 2^{\delta k\frac{1+\tau}{2} \left(h\left(\frac{2\tau}{1+\tau}\right) + \epsilon_\tau\right)},
\end{equation}
where the binary entropy function $h(x) \defeq - x\log(x) - (1-x)\log(1-x)$. To prove the existence of such $\delta_0$, we note that
$$
\lim_{\delta\to\infty} \frac{p}{\delta} = \lim_{\delta\to\infty} \frac{\lfloor \delta(1/2 + \tau /2) \rfloor}{\delta} =\frac{1+\tau}{2},
$$
and for $k\ge \delta$,
$$
\quad \lim_{\delta\to\infty} \frac{t+p}{\delta k - pk + t +p}=\frac{2\tau}{1+\tau},
$$
and for any integers $u> v \ge 1$, the binomial coefficient $\binom{u}{v}$ satisfies
\begin{equation}\label{eq::binomial coefficient}
\sqrt{\frac{u}{8v(u-v)}}2^{uh(v/u)} \leq \binom{u}{v} \leq \sqrt{\frac{u}{2\pi v(u-v)}}2^{uh(v/u)}.
\end{equation}
Then we take $k_0=k_0(\delta, \tau,\epsilon_\tau)$ such that for any $k\ge k_0$, the fraction of errors
\begin{equation*}\label{eq::fraction of errors}
 \frac{t}{n}=\frac{t}{A+\delta k} \ge \tau - \epsilon_\tau   
\end{equation*}
and the blocklength
\begin{equation}\label{eq:n}
n = A+\delta k \le  \delta k(1+\epsilon_\tau)
\end{equation}
and
$$
\frac{(1-\epsilon)^k{\delta \choose p}^{k}}{{\delta k -pk +t+p \choose t+p}} \ge 2.
$$
The latter can be achieved because of the choice of $\epsilon$ in~\eqref{eq::epsilon req} and large enough $\delta$ in~\eqref{eq:: delta choose p}-\eqref{eq:: delta k - pk + t choose t}.
By Lemma~\ref{lem::strong statement}, there exists a feedback error-correcting code with blocklength $n=A+\delta k$ for a message space of size
\begin{equation}\label{eq::size of the code}
M\ge A \frac{(1-\epsilon)^k{\delta \choose p}^{k}}{{\delta k -pk +t+p \choose t+p}} -1 \ge \frac{(1-\epsilon)^k{\delta \choose p}^{k}}{2{\delta k -pk +t+p \choose t+p}},
\end{equation}
capable of correcting $t$ errors when transmitted through the Z-channel. Thus, combining~\eqref{eq:: delta choose p}-\eqref{eq::size of the code} yields
\begin{align*}
R(\tau -\epsilon_\tau)&\ge \frac{\log M}{A+\delta k} \\
&\ge \frac{k\log(1-\epsilon)+\delta k \left(h\left(\frac{1+\tau}{2}\right)-\epsilon_\tau - \frac{1+\tau}{2}h\left(\frac{2\tau}{1+\tau}\right) -\epsilon_\tau\right)-1}{\delta k(1+\epsilon_\tau)}\\
&=(1+\tau)\log\left(\frac{2}{1+\tau}\right) + \tau \log \tau -\epsilon_R,
\end{align*}
where
$$
\epsilon_R\le -\log(1-\epsilon) + 3\epsilon_\tau + \frac{1}{\delta k}.
$$
As  $\epsilon$ and $\epsilon_\tau$ can be taken as small as needed and $\delta$ and $k$ can be arbitrary large, the statement of Theorem~\ref{th::optimal rate} follows.
\end{proof}

\subsection{Complexity of encoding and decoding strategy}
In this section we will elaborate further on the choice of parameters used within the encoding strategy presented in Section~\ref{sec::encoding startegy}. We will show that it is possible to choose the parameters in such a way that the complexity of the encoding algorithm becomes $\mathcal{O}(n)$. 
\begin{theorem}
    For any $\tau$, $0<\tau<1$, there exist successful encoders and decoders for the Z-channel with feedback having complexity $\mathcal{O}(n)$ and achieving the rate $\underline{R}(\tau)$ specified in Theorem \ref{th::optimal rate} arbitrarily close.
\end{theorem}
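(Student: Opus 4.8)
The plan is to instantiate the scheme of Section~\ref{sec::encoding startegy} with the block parameter $\delta$ (and hence $p=\lfloor\delta(1+\tau)/2\rfloor$, $\binom{\delta}{p}$, $\epsilon$ and $A=\lceil\binom{\delta}{p}/\epsilon\rceil$) held \emph{fixed}, and to let only $k$ grow, so that the blocklength $n=A+\delta k\to\infty$. Inspecting the rate bound derived in the proof of Theorem~\ref{th::optimal rate}, for fixed $\delta$ and $k\to\infty$ the additive $-1$ contributes $o(1)$ and the remaining overhead is the term $\log(1-\epsilon)/\delta$, which can be made as small as desired by taking $\delta$ large; thus by first fixing $\epsilon,\epsilon_\tau$ small and $\delta\ge\delta_0(\epsilon_\tau,\tau)$ large (all constants), the rate approaches $\underline{R}(\tau)$ within any prescribed gap. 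The point is that these are \emph{constants} independent of $n$, so every quantity built from them---$p$, $\binom{\delta}{p}$, $A$, and the number $k=\mathcal{O}(n)$ of invocations of the Partitioning Algorithm---is either constant or linear in $n$ (the implied constants may depend on $\delta$, hence on the target gap, but never on $n$).

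With $\delta$ a constant, I would first show that each Partitioning subblock costs $\mathcal{O}(1)$ apart from the message bookkeeping. Precompute two tables of at most $2^\delta$ entries: one mapping a group index in $\{0,\dots,\binom{\delta}{p}-1\}$ to the length-$\delta$ weight-$p$ word that encodes it, and one mapping each possible received word to the number of errors $e$, to the set of $\binom{\delta-p+e}{e}$ consistent group indices, and to their ranks within that set. Both encoder and decoder then process one subblock by $\mathcal{O}(1)$ table lookups, and checking the conditions~\eqref{eq::prop1}--\eqref{eq::prop2} is $\mathcal{O}(1)$ per step (the test $M_i\le n_i-t_i+1$ only needs the bit-length of $M_i$). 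The two terminating routines, the Weight and Uncoded Algorithms, are each invoked at most once and run on at most $n$ symbols, so they cost $\mathcal{O}(n)$.

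The hard part will be the bookkeeping of the transmitted message's index through the repeated \emph{partition-and-re-index} operation: the index is an integer with $\Theta(n)$ bits, and recomputing the group containing it from scratch at each of the $\Theta(n)$ steps would cost $\Theta(n^2)$. I would resolve this by realising the bookkeeping as an arithmetic (range) coder with renormalisation, run identically by encoder and decoder. One maintains a bounded-precision interval representing the current message set $\M_i$; to execute a step one subdivides it into $\binom{\delta}{p}$ (almost) equal parts, reads off the part containing the message as the group index, and, upon learning the consistent set from feedback, remaps those parts onto a contiguous range standing for $\M_{i+1}$. The genuine subtlety is that the consistent subintervals are \emph{scattered} in the natural codeword ordering rather than contiguous; but there are at most $\binom{\delta}{p}$ of them and all have equal width, so compressing them to a contiguous range is a bounded-precision computation touching only $\mathcal{O}(1)$ machine words, while renormalisation streams the message bits in and out at amortised $\mathcal{O}(1)$ per subblock. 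Summing over the $\mathcal{O}(n)$ subblocks gives total complexity $\mathcal{O}(n)$ for both parties. It then remains to observe that the $o(1)$-per-symbol loss inherent in finite-precision arithmetic coding is negligible and does not disturb the limiting rate, so the construction still attains $\underline{R}(\tau)$ arbitrarily closely.
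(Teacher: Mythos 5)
Your proposal follows the paper's skeleton closely --- fix $\epsilon,\epsilon_\tau$ and $\delta\ge\delta_0(\tau,\epsilon_\tau)$ as constants so that \eqref{eq:: delta choose p}, \eqref{eq:: delta k - pk + t choose t} and \eqref{eq:n} hold, take $k\in\Theta(n)$, handle each Partitioning subblock by $\mathcal{O}_\delta(1)$ table lookups over the $\binom{\delta}{p}$ addresses, and charge $\mathcal{O}(n)$ once for the terminating Weight or Uncoded Algorithm --- but it genuinely diverges on the one step you correctly identify as delicate, the per-step re-indexing of the message. The paper keeps the index of $m$ as an exact integer: the segments are laid out contiguously with the $r_i$ large ones on the left, and after each subblock the new index is obtained by adding the lengths of the eligible segments to the left of the one containing $m$ to the relative index of $m$ inside its segment; this is declared to cost $\mathcal{O}_\delta(1)$ per step, which implicitly assumes unit-cost arithmetic on $\Theta(n)$-bit integers (in a bit-complexity model it would indeed give the $\Theta(n^2)$ you fear). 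Your arithmetic-coding realisation with bounded-precision renormalisation buys a genuine $\mathcal{O}(n)$ bound on bit operations, at the price of two things the paper does not have to address: (i) the eligible segments are not exactly equal in width (they differ by one element, depending on how many large segments survive), so the ``compress scattered parts to a contiguous range'' step must be specified consistently for encoder and decoder; and (ii) the rounding loss per subdivision must be shown not to invalidate the exact counting in Lemma~\ref{lem::strong statement}, which you assert but do not verify --- for fixed precision the loss is a small constant per step rather than $o(1)$ per symbol, which still suffices for ``arbitrarily close'' but deserves a sentence of justification. In short, your route is more robust with respect to the computational model; the paper's route is simpler because it works with exact indices and counts arithmetic operations rather than bits.
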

\begin{proof}
It is clear that the encoding and decoding complexity of both Weight Algorithm and Uncoded Algorithm is $\mathcal{O}(n)$. Thus, it remains to concentrate on the analysis of  Partitioning Algorithm.  

Remember that if $t_i>0$ and $M_i > n_i-t_i+1$ the encoder uses Partitioning Algorithm to reduce the set of eligible messages. Recall that $M_i$ denotes $|\mathcal{M}_i|$. To prove this statement we consider the message $m$ to be an index within the set $\{1,\dots,M_i\}$. Within the partitioning algorithm it was only specified that the set of eligible messages is to be split into subsets $\mathcal{M}_{i,j}$, of cardinality $\left \lceil M_i /\binom{\delta}{p} \right\rceil$ or $\left\lfloor M_i /\binom{\delta}{p} \right\rfloor$. 
We consider the set of messages $\mathcal{M}_i$ to be points on a line which are in ascending order according to the message index.
We graphically illustrate the set of eligible messages as a set of discrete points on a straight line which we partition into $\binom{\delta}{p}$ segments $\mathcal{M}_{i,j}$ in the natural way depicted on Figure~\ref{fig:partitioning_algorithm}.
\begin{figure}[h!]
    \centering
    \begin{tikzpicture}
 
    \def\xdist{7}
    \def\ydist{1}
    \def\leng{0.5}
    
    \coordinate (zero) at (0,0);
 
      
    \draw[thick,color=black] (0,0) -- (30*\leng,0);
    \foreach \x in {0,...,10}
        \draw (\x*3*\leng,0.5*\leng)--(\x*3*\leng,-0.5*\leng);
    
    \draw[fill=black] (17*\leng,0) circle[radius=3pt];
    \draw[thick,->] (16*\leng,1.5*\leng) -- (16.7*\leng,0.5*\leng);
    \node at (15.9*\leng,1.8*\leng) {$m$};
    \foreach \x in {3,6,8}
        \draw [thick, color=green] (\x*\leng*3-3*\leng,-0.75*\leng)--(\x*\leng*3,-0.75*\leng);
    \foreach \x in {1,2,4,5,7,9,10}
        \draw [thick, color=red] (\x*\leng*3-3*\leng,-0.75*\leng)--(\x*\leng*3,-0.75*\leng);
    
    \draw[thick,color=green] (10.5*\leng,-3*\leng) -- (19.5*\leng,-3*\leng);
    
    \draw[->] (7.5*\leng,-1*\leng) -- (11*\leng,-2.5*\leng);
    \draw[->] (16.5*\leng,-1*\leng) -- (15*\leng,-2.5*\leng);
    \draw[->] (22.5*\leng,-1*\leng) -- (19*\leng,-2.5*\leng);
    
    \foreach \x in {0,1,2,3}
        \draw (\x*3*\leng+10.5*\leng,-3.5*\leng) -- (\x*3*\leng+10.5*\leng, -2.5*\leng);
    
    \draw [fill=black] (10.5*\leng + 5*\leng,-3*\leng) circle[radius=3pt];
    
    \draw[thick,->] (14.5*\leng, -3.75*\leng)--(14.5*\leng,-6*\leng);
    
    \node at (17*\leng,-5*\leng){repartitioning};
    \draw[thick,color=black] (10.5*\leng,-7*\leng) -- (19.5*\leng,-7*\leng);
    
    \foreach \x in {0,...,10}
        \draw (\x*9/10*\leng+10.5*\leng,-6.5*\leng) -- (\x*9/10*\leng+10.5*\leng,-7.5*\leng);
    
    \draw [fill=black] (10.5*\leng + 5*\leng,-7*\leng) circle[radius=3pt];
    

    
    
    

 \end{tikzpicture}
    \caption{Partitioning algorithm}
    \label{fig:partitioning_algorithm}
\end{figure}
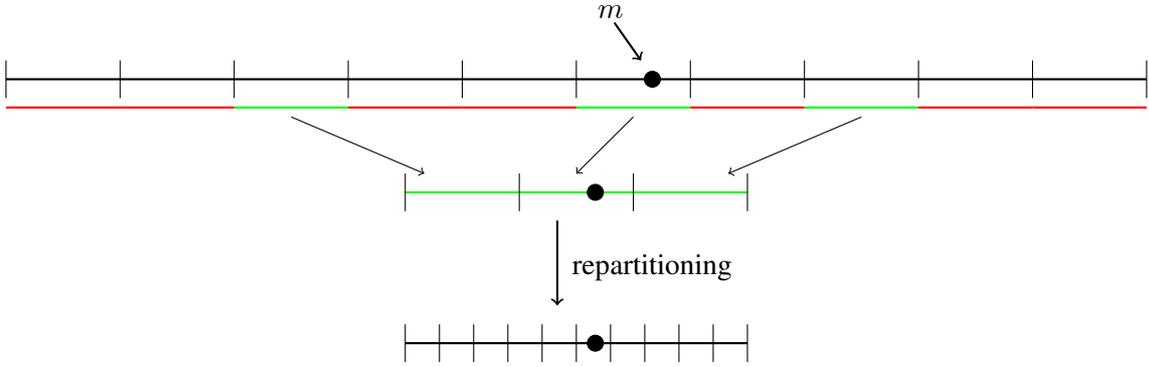

Since $M_i$ may not be divisible by $\binom{\delta}{p}$ we have
\begin{equation}\label{eq:remainder}
    r_i := M_i - \left \lfloor \frac{M_i}{\binom{\delta}{p}} \right \rfloor \binom{\delta}{p}
\end{equation}
subsets of size $\left\lceil \frac{M_i}{\binom{\delta}{p}} \right\rceil$, in the following denoted as large segments, whereas the remaining ones are of size $\left\lfloor \frac{M_i}{\binom{\delta}{p}} \right\rfloor$. We define our partition in such a way that the larger segments are located next to each other starting on the left hand side of the line. Furthermore, due to the fixed weight addressing of subsets within the partitioning algorithm each segment can be identified by a sequence of length $\delta$ containing exactly $p$ ones. The addressing of the segments is in principal arbitrary but has to be known to sender and receiver. We make the simple choice to interpret the addresses as numbers in binary representation and order the addresses in ascending order from left to right.

To perform the partitioning, the encoder transmits the address of the segment containing the message $m$ over the channel. 
If no error occurred the block is of Hamming weight $p$. For every error this Hamming weight is decreased by one. If we denote the number of errors by $e$ there are   $\binom{\delta-p+e}{e}$ segments in accordance with the received partitioning subblock. Those segments are addressed by the sequences which can be created by changing $e$ zeros of the received partitioning subblock to ones. They are referred to in the following as \textit{eligible} segments. The partitioning process is shown in Figure~\ref{fig:partitioning_algorithm} where the eligible segments according to the channel output are underlined in green whereas the segments to be discarded are underlined in red. Due to the feedback the same information can be obtained by the sender, reducing the set of eligible messages.
For the next partitioning step the eligible sets form the new message space. Notice that the order of the messages with respect to each other remains the same even if more than one partitioning step is performed. After each partitioning step the encoder checks whether properties~\eqref{eq::prop1} and~\eqref{eq::prop2} are still both not fulfilled. If this is the case the set of messages is repartitioned and another partitioning step is performed. Otherwise the encoder switches either to the Weight Algorithm or the Uncoded Algorithm, depending on the state of the conditions~\eqref{eq::prop1} and~\eqref{eq::prop2} (see also Figure~\ref{fig::algorithm}).

During potentially numerous partitioning steps it is necessary to keep track of the cardinality of the set of eligible messages after the $i$-th partitioning step $\mathcal{M}_{i+1}$ as well as the relative position of $m$ in within $\mathcal{M}_{i+1}$.

A possible way to do this is the creation of a lookup table with $\binom{\delta}{p}$ entries. The entries within this table contain the $\binom{\delta}{p}$ binary sequences in ascending order when being interpreted as numbers in binary representation. Utilizing the feedback this information is sufficient to compute the cardinality of the set of eligible messages $\mathcal{M}_{i+1}$ after the partitioning step as well as the index of $m$ in $\mathcal{M}_{i+1}$. 
Indeed, by using the lookup table and investigating the received sequence (which is known to the encoder due to the feedback), the sender is able to obtain knowledge about all eligible segments after the $i$-th partitioning step. Furthermore, he knows the lengths of all eligible segments due to the predefined ordering of the segments and his ability to compute $r_i$ using equation~\eqref{eq:remainder}. This is sufficient to compute the cardinality of $\mathcal{M}_{i+1}$. 
Moreover, the encoder is able to obtain the relative index of $m$ within the segment it is contained by using his knowledge of the lengths of the segments $\mathcal{M}_{i,j}$ and the relative index of $m$ in $\mathcal{M}_i$.
Eventually, for the computation of the index of $m$ within $\mathcal{M}_{i+1}$ the encoder uses his knowledge about the ordering of the eligible segments. To obtain the index of $m$ within $\mathcal{M}_{i+1}$ he adds the length of the eligible segments left to the one containing $m$ and the relative index of $m$ within the segment it is contained in.

Next we describe how the decoder obtains knowledge about the message $m$ from the received binary string. The decoder is able to compute the same lookup table for each partitioning step as the encoder does. The strategy of putting the large segments on the left hand side of the line is predefined and known by the decoder. Therefore, the received sequence is sufficient to compute the cardinality of $\mathcal{M}_{i+1}$ after each partitioning step. The number of errors within each received subblock is determined by its Hamming weight.
The computation of $M_{i+1}=|\mathcal{M}_{i+1}|$ as well as the computation of the relative index of $m$ within its containing segment require complexity $\mathcal{O}_\delta(1)$ for each partitioning step at the sender and the receiver side.
The knowledge of $M_{i+1}$ and $t_{i+1}$ enables the decoder to find the point at which the encoder switches his strategy to either the Hamming weight algorithm or the Uncoded algorithm. The index of $m$ within the eligible messages after the final partitioning step can be obtained in a straightforward manner according to the decoding of Weight or Uncoded algorithm. The partitioning steps can then be reversed in accordance to the output sequence to find the message $m$ by plugging the discarded segments back into the line updating the index of $m$ in accordance with the received sequence. After reversing all partitioning steps the decoder knows the index of $m$ within the original message set $\mathcal{M}$.

In the following we discuss in what way $\delta,p$ and $k$ should be chosen such that the complexity of the algorithm is of order $\mathcal{O}(n)$. We need to make sure that the equations \eqref{eq:: delta choose p}, \eqref{eq:: delta k - pk + t choose t} and \eqref{eq:n} can be fulfilled. Those are the main limitations for our choices. According to those limitations we should choose the parameters in a way that the complexity of the encoding strategy is minimized.

So first consider equation \eqref{eq:: delta choose p}.
By looking at equation \eqref{eq::binomial coefficient} we find that equation \eqref{eq:: delta choose p} is fulfilled if
\begin{equation*}
    2^{-\delta \epsilon_\tau} \leq \sqrt{ \frac{\delta}{8\delta (1/2+\tau / 2) \delta (1/2 - \tau / 2)} } = c_1(\tau) \sqrt{\frac{1}{\delta}}
\end{equation*}
for some $c_1(\tau)$ which is a constant depending on $\tau$ but independent of $\delta$. This is equivalent to
\begin{equation}\label{eq:first_condition}
    2^{\delta \epsilon_\tau} \geq \frac{\sqrt{\delta}}{c_1(\tau)}
\end{equation}
and we know that there exists a constant $\delta_0(\tau,\epsilon_\tau)$ independent of $n$ such that inequality~\eqref{eq:first_condition} holds for all $\delta \geq \delta_0(\tau,\epsilon_\tau)$.

Next we consider inequality~\eqref{eq:n}. This inequality is satisfied for $k \in \Theta\left(\frac{n}{\delta}\right)$. It is possible to choose $\delta \geq \delta_0(\tau,\epsilon_\tau)$ to be a constant fulfilling \eqref{eq:first_condition}. If this choice is made, it follows that $k\in \Theta(n)$ and $\delta k \in \Theta(n)$.

Finally we need to consider equation \eqref{eq:: delta k - pk + t choose t}.
Again by equation \eqref{eq::binomial coefficient} we find that equation \eqref{eq:: delta k - pk + t choose t} is fulfilled if
\begin{equation}\label{eq:second_condition}
    \sqrt{\frac{\delta}{2\pi \delta (1/2 + \tau /2) \delta (1/2-\tau / 2)}} = c_2(\tau) \sqrt{\frac{1}{\delta}} \leq 2^{\delta k\frac{1+\tau}{2} \epsilon_\tau},
\end{equation}
where some $c_2(\tau)$ depends only on $\tau$.
Since $k>\delta$ and because for their product it holds $\delta k \in \Theta (n)$ the inequality in \eqref{eq:second_condition} holds.

As $\delta \in \mathcal{O}_{\tau, \epsilon_\tau}(1)$, the creation of the lookup tables for each partitioning step can be done in $\mathcal{O}_{\tau, \epsilon_\tau}(1)$ time. Since at most $k \in \Theta(n)$ partitioning steps need to be performed, the entire partitioning strategy is of order $\mathcal{O}_{\tau, \epsilon_\tau}(n)$. Furthermore, Weight Algorithm as well as Uncoded Algorithm are of complexity order $\mathcal{O}(n)$. Therefore, the overall complexity of encoding and decoding algorithms is of order $\mathcal{O}_{\tau, \epsilon_\tau}(n)$ which completes the proof. 
\end{proof}

\section{Upper Bound on $R(\tau)$}\label{ss::upper bounds}
In this section we establish an upper bound on the rate $R(\tau)$. This upper bound is close to our lower bound for small values of $\tau$.
We make use of an approach similar to the one in~\cite{spencer2003halflie}. We take an encoding strategy and consider only messages $m\in \M$ such that any output sequence in $\mathcal{Y}_{t}^n(m)$ has a relatively large Hamming weight. For those messages, it is possible to derive a good lower bound on the size of $\mathcal{Y}_{t}^n(m)$. The upper bound on the set of possible messages is then obtained by a sphere-packing argument.
\begin{theorem}\label{th::non-trivial upper bound}
For any $\tau$, $0<\tau < 1$, we have
$$
    R(\tau)\le \overline{R}(\tau):= \min_{0\le \tau'\le \tau}  \max\limits_{\substack{0\le r \le 1, \\ h(v) \le 1 - v h\left(\min\left(\frac{\tau-\tau'}{v(1-\tau')}, \frac{1}{2}\right)\right) }} r ,
$$
where $v=v(r,\tau')$ is a real number such that $0\le v\le 1/2$ and $h(v)(1-\tau')=r$.
\end{theorem}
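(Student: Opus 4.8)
The plan is to combine a per-message lower bound on $|\mathcal{Y}_t^n(m)|$ with a sphere-packing (disjointness) argument, after first letting the adversary spend $\tau' n$ of its errors to reduce the problem to an effective blocklength of $(1-\tau')n$; the parameter $\tau'$ is free throughout, which is exactly why the final bound is a minimum over $0\le\tau'\le\tau$. As a sanity check, the special case $\tau'=0$ should reduce to a clean ``heavy/light'' dichotomy on output weight, with $r=h(v)$ and constraint $h(v)\le 1-v\,h(\min(\tau/v,\tfrac12))$.

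First I would establish a counting lemma. View the transmission for a fixed $m$ as an adaptive binary tree: at each step the sender transmits $c_i$, and a branch occurs exactly where $c_i=1$ (one child ``received $1$'', one child for the error ``received $0$''), while $c_i=0$ forces a single child. An element of $\mathcal{Y}_t^n(m)$ is a root-to-leaf path using at most $t$ error-branches. Let $L(m)$ be the minimum number of $1$-transmissions along any such path; since the number of $1$-transmissions on a path equals $w_H(y)$ plus the number of errors, $L(m)\ge \min_{y\in\mathcal{Y}_t^n(m)} w_H(y)$. The crux is an injection from flip/no-flip decision strings of length $L(m)$ with at most $t$ ones into $\mathcal{Y}_t^n(m)$: two distinct decision strings first differ at a $1$-node that both paths reach at the same tree node, where one receives $1$ and the other $0$, forcing the outputs to differ. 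This gives $|\mathcal{Y}_t^n(m)|\ge \sum_{j=0}^{\min(t,L(m))}\binom{L(m)}{j}\ge 2^{L(m)\,h(\min(t/L(m),1/2))}/\mathrm{poly}(n)$.

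Next, fix $\tau'\le\tau$ and let the adversary greedily flip the first $\tau' n$ transmitted ones. All but one message admits this: any message whose all-zero-feedback codeword has weight at most $t$ would contain $0^n$ in its output set, and by disjointness at most one such message exists. Under the resulting all-zero feedback the received prefix is $0^{\rho}$ with $\rho\ge\tau' n$, so the residual problem has length $n-\rho\le (1-\tau')n$ and budget $t'':=(\tau-\tau')n$. Fixing a threshold $W'':=v(1-\tau')n$, I split messages by residual behaviour: \emph{light} if some residual output has weight $<W''$, and \emph{heavy} otherwise. For a heavy message the counting lemma applied to the residual tree yields $|\mathcal{Y}_{res}|\ge 2^{W''\,h(\min(t''/W'',1/2))}/\mathrm{poly}(n)$, and since $t''/W''=\frac{\tau-\tau'}{v(1-\tau')}$ this is exactly the exponent in the statement.

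Finally I would apply disjointness slab by slab in $\rho$. The full outputs have the form $0^{\rho}\cdot(\text{suffix})$ and are globally distinct, so summing $\binom{n-\rho}{<W''}$ over $\rho$ bounds the light messages by $2^{(1-\tau')n\,h(v)}/\mathrm{poly}(n)$; packing the disjoint residual output sets inside each length-$(n-\rho)$ slab bounds the heavy messages by $2^{(1-\tau')n(1-v\,h(\min(\frac{\tau-\tau'}{v(1-\tau')},1/2)))}/\mathrm{poly}(n)$. Hence $R(\tau)\le (1-\tau')\max\big(h(v),\,1-v\,h(\min(\frac{\tau-\tau'}{v(1-\tau')},\tfrac12))\big)$ for every $v$ and every $\tau'$. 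Optimizing over $v$ balances the two exponents at $h(v)=1-v\,h(\min(\cdots))$, which is precisely the constraint in the statement, and the minimum over $\tau'$ produces $\overline{R}(\tau)$. I expect the main obstacle to be the counting lemma itself: making the injection rigorous in the fully adaptive tree (in particular arguing that the first $L(m)$ $1$-nodes are reached under every admissible decision string) and correctly bookkeeping the residual length, budget, and leading-zero structure after the greedy early-error phase.
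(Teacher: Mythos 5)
Your proposal is correct and follows essentially the same route as the paper's proof: an adversarial zero prefix costing at most $\tau' n$ errors, a dichotomy on the weight of the (residual) outputs, a lower bound on the output-set size of heavy messages via an injection from error-position patterns (the paper's map $\phi$ is exactly your decision-string injection, indexed by subsets of $[n']$ rather than flip strings), and a final sphere-packing step. The only cosmetic differences are that the paper zeroes a fixed prefix of length $t'=\tau' n$ instead of tracking a message-dependent prefix length $\rho(m)$, and it arranges the weight dichotomy as a good/bad count at the single balancing threshold $\overline{v}$ with $h(\overline{v})(1-\tau')=\overline{r}-\epsilon$ rather than optimizing over all $v$ at the end.
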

\begin{proof}We fix $\tau$ and $\tau'$ fulfilling the inequalities $0\leq \tau'<\tau \leq 1$ and define $t\defeq \tau n$ and $t'\defeq \tau'n$. Denote $R(\tau)$ by $\overline{r}$. Next we fix some $\epsilon>0$. We define $\overline{v} \in [0,1/2]$ as the unique real number that satisfies $h(\overline{v})(1-\tau')= \overline{r}-\epsilon$.
We define the set of output sequences of the encoding strategy when the encoder would like to transmit the message $m$ and the channel output is zero for the first $t'$ symbols to be 
$$\mathcal{Y}_{t,t'}^n(m):=\{y^n \in\mathcal{Y}_{t}^n(m):\ y_i=0\text{ for }i\in[t']\} \enspace .
$$
 For any real $v$ with $0\le v\le 1$, let $W(n,t', v)$ denote the set of all binary words $x^n$ that have $x_i=0$ for all $i\le t'$ and the Hamming weight at most $v(n-t')$. 
For $n\to\infty$, we have that the cardinality of $W(n,t',\overline{v}-\epsilon)$ is 
$$
| W(n, t',\overline{v}-\epsilon)|=\sum_{i=0}^{(\overline{v}-\epsilon)(n-t')} \binom{n-t'}{i}\le 2^{(n-t')(h(\overline{v}-\epsilon) +o(1))},
$$
where we make use of the inequality~\eqref{eq::binomial coefficient}.   Thus, there is a large enough $n_0$ so that $
|W(n, t',\overline{v}-\epsilon)|$ is at most 
$2^{(n-t')h(\overline{v})-1}$ for any $n\ge n_0$. By Definition~\ref{def::asymptotic rate}, there exists a sufficiently large integer $n>n_0$ such that  we have an encoding function~\eqref{def::encoding algorithm} for a set of messages $\M$ with $|\M|\ge 2^{n(\overline{r}-\epsilon)}$. For simplicity of notation, we assume that $(n-t')(\overline{v}-\epsilon)$ is an integer and equal to $n'$. Define the set of \textit{good} messages, written as $\M_{good}$, that consists of $m\in\M$ such that the Hamming weight of any $y^{n}\in \mathcal{Y}_{t,t'}^{n}(m)$ is at least $n'$. Since $n\ge n_0$, we obtain that $|\M_{good}|\ge |\M|- 2^{(n-t')h(\overline{v})-1}\ge 2^{(n-t')h(\overline{v})-1}$, where we used the fact $h(\overline{v})(1-\tau')=\overline{r}-\epsilon$. Now we prove that for any message $m\in \M_{good}$, the size of $\mathcal{Y}_{t,t'}^{n}(m)$ is uniformly bounded from below as follows
$$
|\mathcal{Y}_{t,t'}^{n}(m)|\ge \max\limits_{0\le \hat t\le \min(t-t', n')}\binom{n'}{\hat t}.
$$
Let $\binom{[a]}{b}$ denote the set of all possible subsets of $[a]$ of size $b$. To show the above inequality, take an arbitrary $\hat t$ with  $0\le \hat t\le \min(t-t',n')$ and define the mapping $\phi:\,\binom{[n']}{\hat t}\to \mathcal{Y}_{t,t'}^{n}(m)$ that takes an arbitrary subset $\{i_1,\ldots,i_{\hat t}\}\in \binom{[n']}{\hat t}$ with $1\le i_1 < i_2 < \ldots <i_{\hat t}\le n'$ and outputs $y^{n}\in\{0,1\}^{n}$ defined as
$$
y_i: = \begin{cases}0\quad &\text{for }i\in[t'],\\
c_{i}(m,y^{i-1})\quad &\text{for }i\in J,\\
1 - c_{i}(m,y^{i-1})\quad&\text{o/w},
\end{cases}
$$
where $J:=\bigcup\limits_{k=0}^{\hat t} [j_k+1,j_{k+1}-1]$, $j_0:=t'$, $j_{\hat t+1}:=n+1$ and for $k\in[\hat t]$, $j_k$ is the smallest $j$ so that the Hamming weight $w_H(y^{j-1}, c_{j}(m,y^{j-1})) = i_k$. One can easily see that this $y^n$ belongs to $\mathcal{Y}^{n}_{t,t'}(m)$ and for distinct $\{i_1,\ldots,i_{\hat t}\}\neq \{s_1,\ldots,s_{\hat t}\}$, the outputs $\phi(\{i_1,\ldots,i_{\hat t}\})$ and $\phi(\{s_1,\ldots,s_{\hat t})$ are different.  As the sets of output sequences are mutually disjoint, we conclude with
$$
|\M_{good}|\max\limits_{0\le \hat t\le \min(t-t', n')}\binom{n'}{\hat t}\le 2^{n-t'}.
$$
As $n$ can be taken arbitrary large, letting $n\to\infty$ yields
$$
(n-t')h(\overline{v}) + n' h\left(\min\left(\frac{t-t'}{n'},\frac{1}{2}\right)\right) + o(n) \le  n-t'.
$$
Recall that $n' = (n-t')(\overline{v}-\epsilon)$. Since the above inequality is true for any $\epsilon>0$, we have
$$
h(\overline{v}) \le 1 - \overline{v} h\left(\min\left(\frac{\tau-\tau'}{\overline{v}(1-\tau')}, \frac{1}{2}\right)\right).
$$
\end{proof}
\section{Conclusion}\label{ss::conclusion}
In this paper, we discussed a new family of error-correcting codes for the Z-channel with noiseless feedback in the combinatorial setting. By providing an explicit construction, we showed that the maximum asymptotic rate $R(\tau)$ is positive for any $\tau<1$. We have shown encoding and decoding algorithms with complexity of order $\mathcal{O}(n)$ achieving our lower bound arbitrarily close. We conjecture that the lower bound on ${R}(\tau)$ presented in Theorem~\ref{th::optimal rate} to be tight for all $\tau$. We considered feedback encoding for the case of noiseless instantaneous feedback within the encoding of every symbol. Considerations about limiting the utilization of the feedback may be an interesting starting point for subsequent research on the proposed error model. Another natural question to be asked is whether the Z-channel capacity (probabilistic setting) can be achieved by a similar encoding algorithm.
	\bibliographystyle{IEEEtran}
	\bibliography{mybibliography.bib}

\end{document}